\def\eE{\e}
\long\def\symbolfootnote[#1]#2{\begingroup%
\def\thefootnote{\fnsymbol{footnote}}\footnote[#1]{#2}\endgroup}
\newcommand{\FNorm }[1]{\mbox{}\|#1\|_\mathrm{F}  }
\newcommand{\FNormS}[1]{\mbox{}\|#1\|_\mathrm{F}^2}
\newcommand{\TNorm }[1]{\mbox{}\|#1\|_2  }
\newcommand{\XNorm }[1]{\mbox{}\|#1\|_{\xi}  }
\newtheorem{theorem}{\bf Theorem}[]
\newtheorem{lemma}[theorem]{Lemma}
\newtheorem{corollary}[theorem]{Corollary}
\newcommand{\qedsymb}{\hfill{\rule{2mm}{2mm}}}
\newcommand{\transp}{^{\textsc{T}}}
\newcommand{\mat}[1]{{\ensuremath{\bm{\mathrm{#1}}}}}
\newcommand{\pinv}[1]{ {#1}^\dagger}
\def\rank{\hbox{\rm rank}}
\def\b{{\mathbf b}}
\def\e{{\mathbf e}}
\def\v{{\mathbf v}}
\def\matA{\mat{A}}
\def\matB{\mat{B}}
\def\matC{\mat{C}}
\def\matE{\mat{E}}
\def\matH{\mat{H}}
\def\matI{\mat{I}}
\def\matS{\mat{S}}
\def\matU{\mat{U}}
\def\matV{\mat{V}}
\def\matX{\mat{X}}
\def\matY{\mat{Y}}
\def\matZ{\mat{Z}}
\def\matSig{\mat{\Sigma}}
\def\matOmega{\mat{\Omega}}
\def\scl{{\textsc{l}}}
\DeclareMathSymbol{\Prob}{\mathbin}{AMSb}{"50}
\newcommand\remove[1]{}
\def\math#1{$#1$}
\def\frac#1#2{{#1\over #2}}
\DeclareMathSymbol{\R}{\mathbin}{AMSb}{"52}
\def\x{{\mathbf x}}
\def\z{{\mathbf z}}
\def\b{{\mathbf b}}
\def\norm#1{{\|#1\|}}
\def\ceil#1{{\left\lceil\,#1\,\right\rceil}}
\def\dotfil{\leaders\hbox to 1.5mm{.}\hfill}
\newcounter{rmnum}
\def\RN#1{\setcounter{rmnum}{#1}\uppercase\expandafter{\romannumeral\value{rmnum}}}
\def\rn#1{\setcounter{rmnum}{#1}\expandafter{\romannumeral\value{rmnum}}}
\begin{document}

\title{\bf A Note on Sparse Least-squares Regression
}

\author{
{\bf Christos Boutsidis} \\
Mathematical Sciences Department \\
IBM T.J. Watson Research Center \\
%Yorktown Heights, NY 10598 \\
cboutsi@us.ibm.com
\and
{\bf Malik Magdon-Ismail} \\
Computer Science Department \\
Rensellaer Polytechnic Institute \\
magdon@cs.rpi.edu
}

\maketitle

\begin{abstract}
%% Text of abstract
We compute a \emph{sparse} solution 
to the classical least-squares problem 
$\min_\x\TNorm{\matA \x -\b},$ where
 $\matA$ is an arbitrary matrix.
We describe a novel algorithm for this sparse least-squares problem. The algorithm operates as follows: first, it selects columns from $\matA$, and then solves a least-squares problem  only with the selected columns. The column selection algorithm that we use is known to perform well for the well studied column subset selection problem. The contribution of this article is to show that it gives favorable results for sparse least-squares as well.
Specifically, we prove that the solution vector obtained by our algorithm is close to the solution vector obtained via what is known as the `` SVD-truncated regularization approach''.
\end{abstract}

\section{Introduction}
Fix inputs $\matA \in \R^{m \times n}$ and $\b \in \R^m$. We study 
least-squares regression:
$ \min_{\x \in \R^n} \TNorm{\matA \x - \b}.$
It is well known that the minimum norm solution vector
can be found
using the pseudo-inverse of $\matA$: % (see Section~\ref{sec:pre} for useful notation),
$\x^{*} = \pinv{\matA}\b  = ( \matA\transp \matA)^{-1} \matA\transp \b.$
When \math{\matA} is ill-conditioned, \math{\pinv{\matA}} becomes unstable to perturbations and overfitting
can become a serious problem. For example, when the smallest non-zero singular value of $\matA$ is close to zero,
the largest singular value of \math{\pinv{\matA}} can be extremely large and the solution vector $\x^{*} = \pinv{\matA}\b$ obtained via a numerical algorithm is not
the optimal, due to numerical instability issues. Practitioners deal with such situations using \emph{regularization}.

Popular regularization techniques are the
Lasso~\cite{Tib96}, the Tikhonov regularization~\cite{GHO00}, and the
truncated SVD~\cite{Han87}. The lasso minimizes
$\TNorm{\matA \x - \b}  + \lambda ||\x||_1,$
and Tikhonov regularization minimizes
$\TNorm{\matA \x - \b}^2  + \lambda ||\x||_2^2$
(in both cases $\lambda > 0$ is the regularization parameter).
The truncated SVD minimizes
$\TNorm{\matA_k \x - \b} ,$
where $k < \rank(\matA)$ is a rank parameter and
 $\matA_k \in \R^{m \times n}$ 
is
the best rank-$k$ approximation to $\matA$ obtained
via the SVD. % (see Section~\ref{sec:pre} for useful notation).
So,
the truncated SVD 
solution is
$
\x_k^*= \pinv{\matA}_k \b.
$
Notice that these regularization
methods impose parsimony on $\x$ in different ways. A
combinatorial approach to 
regularization is to explicitly impose the sparsity constraint on~$\x$, 
requiring it to have few non-zero elements.
 We give a new deterministic algorithm which, 
for $r=O(k)$,
computes an $\hat{\x}_r \in \R^{n}$ with at most $r$ non-zero entries such that
$\TNorm{\matA \hat{\x}_r - \b} \approx \TNorm{\matA \x_k^* - \b}.$
%Our main motivation is interpretability: a sparse vector $\hat{\x}_r$ implies that $\b$ can be (approximately) expressed as a linear
%combination of a small set of columns from $\matA$. On the other hand, any dense solution vector $\x$, such as $\x_k^*$, expresses
%$\b$ as a linear combination of (up to) all the columns of $\matA$.

\subsection{Preliminaries}%\label{sec:pre}
The compact (or thin) Singular Value Decomposition (SVD) of a matrix $\matA \in \R^{m \times n}$ of rank $\rho$ is
\begin{eqnarray*}
\label{svdA} \matA
         = \underbrace{\left(\begin{array}{cc}
             \matU_{k} & \matU_{\rho-k}
          \end{array}
    \right)}_{\matU_\matA \in \R^{m \times \rho}}
    \underbrace{\left(\begin{array}{cc}
             \matSig
_{k} & \bf{0}\\
             \bf{0} & \matSig
_{\rho - k}
          \end{array}
    \right)}_{\matSig_\matA \in \R^{\rho \times \rho}}
    \underbrace{\left(\begin{array}{c}
             \matV_{k}\transp\\
             \matV_{\rho-k}\transp
          \end{array}
    \right)}_{\matV_\matA\transp \in \R^{\rho \times n}},
\end{eqnarray*}
Here, $\matU_k \in \R^{m \times k}$ and $\matU_{\rho-k} \in \R^{m \times (\rho-k)}$ contain the left singular vectors of $\matA$.
Similarly, $\matV_k \in \R^{n \times k}$ and $\matV_{\rho-k} \in \R^{n \times (\rho-k)}$
contain the right singular vectors. The singular values of $\matA$, which we denote as
\math{\sigma_1(\matA)\ge\sigma_2(\matA)\ge\cdots\ge\sigma_\rho(\matA)>0} are
contained in \math{\matSig
_k \in \R^{k \times k}} and \math{\matSig
_{\rho-k}
\in \R^{(\rho-k) \times (\rho-k)}}.
%One can compute the SVD of $\matA$ in $O(mn\min\{m,n\})$ time.
We use $\pinv{\matA} = \matV_\matA \matSig
_\matA^{-1} \matU_\matA\transp \in \R^{n \times m}$
to denote the Moore-Penrose
pseudo-inverse of $\matA$ with $\matSig
_\matA^{-1}$ denoting the inverse of
$\matSig
_\matA$. Let $\matA_k=\matU_k \matSig
_k \matV_k\transp \in \R^{m \times n}$ and
$\matA_{\rho-k} = \matA - \matA_k = \matU_{\rho-k}\matSig
_{\rho-k}\matV_{\rho-k}\transp \in \R^{m \times n}$.
\begin{algorithm}[!h]
\begin{framed}
\caption{
%{\sf Deterministic Sparse Solver for Least-Squares Regression}
{\sf Deterministic Sparse Regression}
}
\label{alg:deterministic}
\begin{algorithmic}[1]
\STATE 
{\bf Input:} \math{\matA \in \R^{m \times n}}, $\b \in \R^m,$
target rank \math{k < \rank(\matA)}, and parameter \math{0 < \varepsilon < 1/2}.
\STATE Obtain \math{\matV_k \in \R^{n \times k}}
from the SVD of $\matA$ and compute \math{\matE =\matA-\matA\matV_k\matV_k\transp \in \R^{m \times n}}.
\STATE Set
$\matC = \matA\matOmega\matS \in \R^{m \times r}$,
with $r = \ceil{\frac{9k}{\varepsilon^2}}$ and
\\
\hspace*{-0.05in}\math{[\matOmega, \matS]={\sf DeterministicSampling}(\matV_k\transp,\matE,r),}
\STATE Set $\x_r = \pinv{\matC}\b \in \R^{r}$, 
and $\hat{\x}_r=\matOmega\matS\x_r \in \R^{n}$ ($\hat\x_r$ has at most
\math{r} non-zeros 
at the 
indices of the selected columns in $\matC$).
\STATE {\bf Return} $\hat{\x}_r \in \R^{n}$.
\end{algorithmic}
\end{framed}
\end{algorithm}
For $k < \rank(\matA)$, the SVD gives the best rank $k$ approximation to \math{\matA} in both the
spectral and the Frobenius norm: for $\tilde\matA \in \R^{m \times n}$, let \math{\rank(\tilde\matA) \le k}; then, for $\xi=2,\mathrm{F}$,
\math{\XNorm{\matA-\matA_k}\le\XNorm{\matA-\tilde\matA}}. Also,
$\TNorm{\matA-\matA_k} = \TNorm{\matSig
_{\rho-k}} = \sigma_{k+1}(\matA)$, and
$\FNormS{\matA-\matA_k} = \FNormS{\matSig
_{\rho-k}} = \sum_{i=k+1}^{\rho}\sigma_{i}^2(\matA) $.
The Frobenius and the spectral norm of $\matA$ are defined as: $ \FNormS{\matA} =
\sum_{i,j} \matA_{ij}^2 =
\sum_{i=1}^\rho\sigma_i^2(\matA)$;
and $\TNorm{\matA} = \sigma_1(\matA)$.
Let $\matX$ and $\matY$ be matrices
of appropriate dimensions; then,
$\FNorm{\matX\matY} \leq \min\{\FNorm{\matX}
\TNorm{\matY},\TNorm{\matX}
\FNorm{\matY}\}$. This is a stronger version of the standard
submultiplicativity property $\norm{\matX\matY}_{\mathrm{F}} \leq \norm{\matX}_{\mathrm{F}}
\norm{\matY}_{\mathrm{F}}$, which we will refer to as ``spectral
submultiplicativity''.

%Recall the least-squares problem $\min_{\x}\TNorm{\matA \x - \b}$.
Given  $k < \rho=\rank(\matA)$, the truncated rank-\math{k}  SVD regularized
weights are
$$ \x_{k}^* = \pinv{\matA}_k \b = \matV_k\matSig
_k^{-1} \matU_k\transp \b \in \R^{n},$$
and note that
$  \TNorm{ \b -  \matA_k \pinv{\matA}_k\b}=\TNorm{ \b -  \matU_k \matU_k\transp \b}$

Finally, for $r < n$, 
let \math{\matOmega=[\z_{i_1},\ldots,\z_{i_r}] \in \R^{n \times r}}
where \math{\z_i \in \R^{m}} are standard basis vectors;
\math{\matOmega} is a \emph{sampling matrix} because
\math{\matA\matOmega\in \R^{m \times r}}  is a matrix whose columns are 
sampled (with possible repetition) from the columns of~\math{\matA}.
Let $\matS \in \R^{r \times r}$ 
be a diagonal \emph{rescaling matrix} with positive entries; then,
we define the sampled and rescaled columns from \math{\matA} by
$\matC = \matA \matOmega\matS$: \math{\matOmega} samples some columns from $\matA$ and
then \math{\matS} rescales them.

\section{Results} \label{sec0}
Our sparse solver to minimize
\math{\norm{\matA\x-\b}_2}
takes as input the sparsity parameter $r$ 
(i.e., the solution vector $\x$ is allowed at most $r$ non-zero entries), 
and 
selects $r$ rescaled 
columns from $\matA$ (denoted by  $\matC$). We
then solve the least-squares problem to minimize
$||\matC \x - \b||_2$.
The result is a dense vector $\pinv{\matC} \b$ with $r$ dimensions. 
The sparse solution \math{\hat\x_r} will be zero at indices 
 corresponding to columns not selected in $\matC$, and we use 
$\pinv{\matC} \b$ to compute the other entries of \math{\hat\x_r}.
%We give the details of those algorithms in Section~\ref{sec6}. We now describe the corresponding approximation bound.
\label{sec:detalgo}
\begin{algorithm}[t]
\begin{framed}
   \caption{{\sf DeterministicSampling} (from \cite{BDM11a})}
\begin{algorithmic}[1]
   \STATE {\bfseries Input:} $\matV\transp=[\v_1,\ldots,\v_n] \in \R^{k \times n}$;
\math{\matE=[\eE_1,\ldots,\eE_n] \in \R^{m \times n}}; and \math{r > k}.
   \STATE {\bfseries Output:} Sampling and rescaling matrices \math{\matOmega
 \in \R^{n \times r},\matS \in \R^{r \times r}}.
   \STATE Initialize  $\matB_0 = \bm{0}_{k \times k}$, $\matOmega
 = \bm{0}_{n \times r}$, $\matS=\bm{0}_{r \times r}$.
   \FOR{$\tau=0$ {\bfseries to} $r-1$}
   \STATE Set \math{\scl_\tau=\tau-\sqrt{rk}}.
   \STATE Pick index $i\in\{1,2,...,n\}$ and $t$ such that
    $U(\eE_i)\le\frac{1}{t}\le
       L(\v_i,\matB_{\tau},\scl_\tau).$
   \STATE Update $\matB_{\tau+1}= \matB_{\tau} + t \v_i\v_i\transp$. Set $\matOmega
_{i,\tau+1} = 1$ and $\matS_{\tau+1,\tau+1} = 1/\sqrt{t}$.
   \ENDFOR
   \STATE {\bfseries Return:} \math{\matOmega
 \in \R^{n \times r},\matS \in \R^{r \times r}}.
\end{algorithmic}
\end{framed}
\end{algorithm}
%%%
\begin{theorem} \label{theorem2}
Let $\matA \in \R^{m \times n}$, $\b \in \R^{m}$,
rank $k < \rank(\matA) $, and $0 < \varepsilon < 1/2$.
Algorithm~\ref{alg:deterministic} runs in time $O(m n \min\{m,n\} + n k^3 / \varepsilon^2)$ and
returns $\hat\x_r \in \R^n$ with at most
$r = \ceil{9 k / \varepsilon^2}$ non-zero entries such that:
%\vspace{-0.3in}
$$
\TNorm{\matA \hat\x_r-\b}
\le
\TNorm{ \matA \x_k^*-\b} + (1+\varepsilon) \cdot \norm{\b}_2 \cdot 
\frac{\norm{\matA-\matA_k}_{\mathrm{F}}}{\sigma_{k}(\matA)}.
$$
\end{theorem}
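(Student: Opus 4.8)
The plan is to turn the residual into a projection problem and then bound it with a single well-chosen test vector. Since $\hat\x_r = \matOmega\matS\x_r$ with $\x_r = \pinv{\matC}\b$, we have $\matA\hat\x_r = \matA\matOmega\matS\pinv{\matC}\b = \matC\pinv{\matC}\b$, the orthogonal projection of $\b$ onto the column space of $\matC$. Hence $\TNorm{\matA\hat\x_r - \b} = \min_{\y\in\R^r}\TNorm{\matC\y - \b}$, so it suffices to exhibit one $\y$ making $\TNorm{\matC\y-\b}$ small. I would first record two facts from the SVD: by construction $\matE = \matA-\matA\matV_k\matV_k\transp = \matA-\matA_k$, and since $\matA\matV_k=\matU_k\matSig_k$ we get $\matA\x_k^* = \matA\matV_k\matSig_k^{-1}\matU_k\transp\b = \matU_k\matU_k\transp\b$, so that $\TNorm{\matA\x_k^*-\b}$ is exactly the leading term in the claimed bound.

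Next I would split via the triangle inequality, $\TNorm{\matC\y - \b} \le \TNorm{\matC\y - \matA\x_k^*} + \TNorm{\matA\x_k^* - \b}$, and choose $\y$ to annihilate the first term as much as the sampling allows. Writing $\w = \matSig_k^{-1}\matU_k\transp\b$ so that $\matA\x_k^* = \matA\matV_k\w = \matU_k\matSig_k\w$, the natural choice is $\y = \pinv{(\matV_k\transp\matOmega\matS)}\w$, which is well defined because {\sf DeterministicSampling} forces $\matV_k\transp\matOmega\matS$ to have full row rank. The key algebraic step is the decomposition $\matC = \matA\matOmega\matS = \matA_k\matOmega\matS + \matE\matOmega\matS = \matU_k\matSig_k\matV_k\transp\matOmega\matS + \matE\matOmega\matS$ (using $\matA_k=\matU_k\matSig_k\matV_k\transp$). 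With the above $\y$ the factor $\matV_k\transp\matOmega\matS\,\y$ collapses to $\w$, giving $\matC\y = \matU_k\matSig_k\w + \matE\matOmega\matS\,\y = \matA\x_k^* + \matE\matOmega\matS\,\y$, so that $\TNorm{\matC\y - \matA\x_k^*} = \TNorm{\matE\matOmega\matS\,\y} \le \FNorm{\matE\matOmega\matS}\cdot\TNorm{\y}$.

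Finally I would invoke the two defining guarantees of {\sf DeterministicSampling} from \cite{BDM11a}: the Frobenius bound $\FNorm{\matE\matOmega\matS} \le \FNorm{\matE} = \FNorm{\matA-\matA_k}$, and the spectral bound $\sigma_k(\matV_k\transp\matOmega\matS) \ge 1 - \sqrt{k/r}$. The latter yields $\TNorm{\y} \le \TNorm{\w}/(1-\sqrt{k/r})$, and $\TNorm{\w} \le \TNorm{\matSig_k^{-1}}\,\TNorm{\matU_k\transp\b} \le \TNorm{\b}/\sigma_k(\matA)$. With $r = \ceil{9k/\varepsilon^2}$ we have $\sqrt{k/r}\le\varepsilon/3$, and a one-line estimate shows $1/(1-\varepsilon/3) \le 1+\varepsilon$ for $0<\varepsilon<1/2$. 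Chaining these gives $\TNorm{\matC\y - \matA\x_k^*} \le (1+\varepsilon)\,\TNorm{\b}\,\FNorm{\matA-\matA_k}/\sigma_k(\matA)$, which together with the triangle inequality is exactly the stated estimate. The running time follows from the $O(mn\min\{m,n\})$ cost of the SVD in step 2 plus the $r=O(k/\varepsilon^2)$ iterations of {\sf DeterministicSampling}, each scanning $n$ indices at $O(k^2)$ cost, i.e.\ $O(nk^3/\varepsilon^2)$.

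The main obstacle is the identity $\matC\y - \matA\x_k^* = \matE\matOmega\matS\,\y$: everything hinges on choosing $\y$ so that the sampled-and-rescaled top-$k$ block $\matV_k\transp\matOmega\matS\,\y$ reproduces $\w$ exactly, which is possible precisely because {\sf DeterministicSampling} keeps $\matV_k\transp\matOmega\matS$ well-conditioned while \emph{simultaneously} keeping $\FNorm{\matE\matOmega\matS}$ small. The substantive input is that a single deterministic run achieves both the spectral lower bound and the Frobenius upper bound at once (the dual-set spectral--Frobenius property), and this is what \cite{BDM11a} supplies; the rest of the argument is the bookkeeping above.
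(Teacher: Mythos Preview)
Your proof is correct and follows essentially the same route as the paper: both use the optimality of $\pinv{\matC}\b$ to replace it by the test vector $\y=\pinv{(\matV_k\transp\matOmega\matS)}\matSig_k^{-1}\matU_k\transp\b$, then split $\matC=\matU_k\matSig_k\matV_k\transp\matOmega\matS+\matE\matOmega\matS$ so that the top-$k$ piece reproduces $\matA\x_k^*$, and finally bound the remaining $\matE\matOmega\matS$ term via the dual spectral/Frobenius guarantees of {\sf DeterministicSampling}. The only presentational difference is that the paper packages the first two steps into a standalone structural lemma for general targets $\matB$ and factorizations $\matA=\matH\matZ\transp+\matE$ before specializing to $\matB=\b$, $\matH=\matU_k\matSig_k$, $\matZ=\matV_k$.
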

This upper bound is ``small'' when $\matA$ is ``effectively'' 
low-rank, i.e., $\FNorm{\matA-\matA_k} / \sigma_k(\matA) \ll 1$.
Also, a trivial bound is 
$\TNorm{\matA \hat\x_r-\b}\le\TNorm{\b}$ 
(error when $\hat\x_r$ is the all-zeros vector),
because
$\TNorm{ \matC \pinv{\matC}\b - \b} \le \TNorm{\matC {\bf 0}_{r \times 1} - \b}
=  \TNorm{\b}$.
%= \TNorm{ \matA {\bf 0}_{n \times 1} - \b}$ 
% $\hat\x_r$ obtained by our algorithm will always give a smaller 
%or equal error compared to the all-zeros vector, since $\hat\x_r$ contains 
%a subset of elements from $\pinv{\matA} \b$. 

In the heart of Algorithm~\ref{alg:deterministic} lies a method for selecting columns from $\matA$ (Algorithm 2), which was originally developed in~\cite{BDM11a} for column subset selection, where one selects columns $\matC$ from $\matA$ to  minimize $\FNorm{ \matA - \matC \pinv{\matC} \matA }$. Here, we adopt the same algorithm for least-squares.

The main tool used to prove Theorem~\ref{theorem2}
is a new ``structural'' result that may be of independent interest.
\begin{lemma} \label{result2}
Fix $\matA \in \R^{m \times n}$, $\b \in \R^{n}$, rank $k < \rank(\matA)$, and sparsity  $r > k$.
%Consider the problem of minimizing $\TNorm{\matA \x-\b}$.
Let $\x_k^* = \pinv{\matA}_k \b \in \R^n$, where $\matA_k \in \R^{m \times n}$ is the rank-$k$ SVD approximation to $\matA$.
Let $\matOmega
\in \R^{n \times r}$ and $\matS \in \R^{r \times r}$ be any sampling and
rescaling matrices with $\rank(\matV_k\transp\matOmega \matS) = k$.
Let $\matC = \matA\matOmega
\matS \in \R^{m \times r}$ be a matrix of sampled rescaled columns of 
\math{\matA}
and 
let $\hat\x_r =\matOmega
\matS\pinv{\matC}\b\in \R^{n}$ (having at most $r$ non-zeros).
Then,
$$ \TNorm{\matA \hat\x_r - \b} \le
  \TNorm{\matA \x_k^* - \b} +
\TNorm{(\matA - \matA_k) \matOmega
\matS \pinv{ \left( \matV_k\transp\matOmega
\matS \right)}\matSig
_k\matU_k\transp\b}.
$$
\end{lemma}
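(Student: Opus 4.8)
The plan is to use the optimality of orthogonal projection: $\matA\hat\x_r$ will turn out to be the point of $\col(\matC)$ nearest to $\b$, so it suffices to produce \emph{one} convenient competitor in $\col(\matC)$ and bound its distance to $\b$. First I would rewrite the left-hand side. Since $\hat\x_r = \matOmega\matS\pinv{\matC}\b$ and $\matC = \matA\matOmega\matS$, we have $\matA\hat\x_r = \matA\matOmega\matS\pinv{\matC}\b = \matC\pinv{\matC}\b$, and $\matC\pinv{\matC}$ is the orthogonal projector onto $\col(\matC)$. Hence $\matI-\matC\pinv{\matC}$ is nonexpansive and annihilates $\col(\matC)$, so for \emph{any} $\z\in\col(\matC)$,
$$\TNorm{\matA\hat\x_r-\b}=\TNorm{(\matI-\matC\pinv{\matC})\b}=\TNorm{(\matI-\matC\pinv{\matC})(\b-\z)}\le\TNorm{\b-\z}.$$
Everything reduces to choosing a single $\z\in\col(\matC)$ close to $\b$ that exposes $\x_k^*$.

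The key step is the choice of competitor. Put $\matW=\matV_k\transp\matOmega\matS\in\R^{k\times r}$ and set
$$\y=\matOmega\matS\,\pinv{\matW}\,\matSig_{k}^{-1}\,\matU_k\transp\b\in\R^n,\qquad \z:=\matA\y=\matC\,\pinv{\matW}\,\matSig_{k}^{-1}\,\matU_k\transp\b\in\col(\matC).$$
The hypothesis $\rank(\matW)=k$ says $\matW$ has full row rank, so $\matW\pinv{\matW}=\matI_k$; this is the only place the rank assumption enters. It makes the rank-$k$ part of $\z$ collapse: since $\matA_k=\matU_k\matSig_{k}\matV_k\transp$,
$$\matA_k\y=\matU_k\matSig_{k}\,\matW\pinv{\matW}\,\matSig_{k}^{-1}\matU_k\transp\b=\matU_k\matU_k\transp\b.$$
On the other hand a one-line SVD computation ($\matA\matV_k=\matU_k\matSig_{k}$) gives $\matA\x_k^*=\matA\matV_k\matSig_{k}^{-1}\matU_k\transp\b=\matU_k\matU_k\transp\b$, so $\matA_k\y=\matA\x_k^*$. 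Splitting $\matA=\matA_k+(\matA-\matA_k)$ then yields $\z=\matA\y=\matA\x_k^*+(\matA-\matA_k)\y$.

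Combining the two displays and applying the triangle inequality finishes the argument:
$$\TNorm{\matA\hat\x_r-\b}\le\TNorm{\b-\z}=\TNorm{(\matA\x_k^*-\b)+(\matA-\matA_k)\y}\le\TNorm{\matA\x_k^*-\b}+\TNorm{(\matA-\matA_k)\y},$$
and $(\matA-\matA_k)\y=(\matA-\matA_k)\matOmega\matS\pinv{\matW}\matSig_{k}^{-1}\matU_k\transp\b$ is exactly the residual in the claimed bound (its spectral-submultiplicative estimate produces the $1/\sigma_k(\matA)$ factor of Theorem~\ref{theorem2}, which is how I would sanity-check the exponent on $\matSig_k$). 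The main obstacle is not the estimation but the guess: one must find the competitor $\y$ whose rank-$k$ image reproduces $\matA\x_k^*$ precisely, so that the rank-$k$ contribution cancels and only the tail $(\matA-\matA_k)\y$ survives. The full-row-rank condition on $\matV_k\transp\matOmega\matS$ is exactly the structural fact that makes this cancellation possible, and once it is in hand the remainder is just the triangle inequality for the Euclidean norm together with the variational property of the projector $\matC\pinv{\matC}$.
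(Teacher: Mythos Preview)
Your proof is correct and follows essentially the same route as the paper: the paper first states a general factorization lemma (its Lemma~\ref{result1}) and then specializes to $\matH=\matU_k\matSig_k$, $\matZ=\matV_k$, $\matE=\matA-\matA_k$, but the resulting competitor $\matC\pinv{(\matV_k\transp\matOmega\matS)}\matSig_k^{-1}\matU_k\transp\b$, the cancellation via $\matW\pinv{\matW}=\matI_k$, and the final triangle inequality are identical to yours. You also correctly identified that the exponent on $\matSig_k$ in the displayed statement should be $-1$ (as confirmed by the paper's own Corollary~\ref{result2x} and the $1/\sigma_k(\matA)$ factor in Theorem~\ref{theorem2}).
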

The lemma says that if the sampling matrix satisfies a simple
rank condition, then solving the regression on the sampled columns
gives a sparse solution to the original problem with a performance
guarantee.
%This lemma is the main technical contribution of our work. Combining this lemma with two existing
%algorithms that satisfy its requirements gives our main theorems.
%As we have already mentioned above, to design our deterministic algorithm (Algorithm 1) we used a method from~\cite{BDM11a} (Algorithm 2);
%to design our randomized algorithm (Algorithm 3) we used a method from~\cite{RV07} (Algorithm 4).

\subsection{Algorithm Description} \label{sec6}
Algorithm~\ref{alg:deterministic} selects \math{r} columns from \math{\matA} to
form \math{\matC} and the corresponding sparse vector $\hat\x_r$.
The core of Algorithm~\ref{alg:deterministic} is the subroutine
{\sf DeterministicSampling}, which is a method to simultaneously
sample the columns of two matrices, while controlling their spectral and
Frobenius norms.
{\sf DeterministicSampling} takes inputs \math{\matV\transp\in
\R^{k\times n}}
and \math{\matE\in\R^{m\times n}}; the matrix
\math{\matV} is orthonormal, \math{\matV\transp\matV=\matI_k}.
(In our application, \math{\matV\transp=\matV_k\transp} 
and $\matE = \matA - \matA_k$.)
We view \math{\matV\transp} and \math{\matE} 
as two sets of \math{n} column vectors,
$\matV\transp=[\v_1,\ldots,\v_n],$ and
$\matE=[\eE_1,\ldots,\eE_n].$
%Also, here $\eE_i$'s are not necessarily the standard basis vectors. 

Given $k$ and $r$ and the iterator $\tau = 0, 1,2,...,r-1,$ define 
$\scl_\tau=\tau-\sqrt{rk}$.
For a symmetric
matrix \math{\matB\in\R^{k\times k}} with eigenvalues \math{\lambda_1,\ldots,
\lambda_k} and $\scl\in\R$, define functions
$
\phi(\scl, \matB) =  \sum_{i=1}^k\frac{1}{\lambda_i-\scl},
$
and
$ L(\v, \matB, \scl)  =
\frac{\v\transp(\matB-\scl'\matI_k)^{-2}\v}
{\phi(\scl', \matB)-\phi(\scl,\matB)} -\v\transp(\matB-\scl'\matI_k)^{-1}\v,$
where \math{\scl'=\scl+1}. Also, for a column $\eE$, define
$U(\eE) = \frac{\eE\transp\eE}{\norm{\matA}_{\mathrm{F}}
^2}\left(1-\sqrt{k/r}\right).$
At step \math{\tau}, the algorithm selects any column
\math{i} for which
$U(\eE_i)\le L(\v_i,\matB,\scl_\tau)$ and computes a 
weight $t$ 
such that $U(\eE_i)\le t^{-1} \le L(\v_i,\matB,\scl_\tau)$;
Any $t^{-1}$ in the interval is acceptable.
(There is always at least one such index $i$ (see Lemma 8.1 in~\cite{BDM11a}).)

The running time is dominated by the search for a column which
satisfies \math{U\le L}. To compute \math{L}, one needs
\math{\phi(\scl,\matB)}, and hence the eigenvalues of
\math{\matB}, and \math{(\matB-\scl'\matI_k)^{-1}}. This takes
\math{O(k^3)} time once per iteration, for a total of
\math{O(rk^3)}. Then, for \math{i=1,\ldots,n}, we need to compute
\math{L} for every \math{\v_i}. This takes \math{O(nk^2)} per iteration,
for a total of \math{O(nrk^2)}. To compute \math{U}, we need
\math{\eE_i\transp\eE_i} for \math{i=1,\ldots,n} which takes
\math{O(mn)}. So, in total, {\sf DeterministicSampling} takes \math{O(nrk^2+mn)} time, hence
Algorithm~\ref{alg:deterministic} needs \math{O( m n \min\{m,n\} + n k^3 / \varepsilon^2 )} time.

{\sf DeterministicSampling} uses a greedy procedure to 
sample columns of \math{\matV_k^T} that satisfy the next 
Lemma.
\begin{lemma}[\cite{BDM11a}]
\label{theorem:2setGeneralF}
On  \math{\matV\transp\in\R^{k\times n}}, \math{\matE\in\R^{m\times n}}, and $r > k$
{\sf DeterministicSampling} returns $\matOmega
, \matS$ satisfying

$\sigma_k(\matV\transp \matOmega
 \matS) \ge 1 - \sqrt{k/r}, \qquad \norm{\matE\matOmega
 \matS}_{\mathrm{F}}
  \le  \norm{\matE}_{\mathrm{F}}.$
\end{lemma}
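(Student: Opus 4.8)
The plan is to prove the two guarantees by the barrier-function (Batson--Spielman--Srivastava style) analysis underlying {\sf DeterministicSampling}, tracking the matrix $\matB_\tau$ and the lower barrier $\scl_\tau=\tau-\sqrt{rk}$ through the $r$ iterations. Throughout I write $\matM_\tau=\matB_\tau-(\scl_\tau+1)\matI_k$ and use orthonormality of $\matV$, i.e. $\sm_{i=1}^n\v_i\v_i\transp=\matV\transp\matV=\matI_k$, which is the only structural property of $\matV\transp$ we need. The upper cut $1/t\le L(\v_i,\matB_\tau,\scl_\tau)$ will drive the spectral bound, and the lower cut $U(\eE_i)\le1/t$ will drive the Frobenius bound.

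\textbf{Spectral side.} First I would establish, by induction on $\tau$, two invariants: (i) $\lambda_{\min}(\matB_\tau)>\scl_\tau$ and (ii) $\phi(\scl_\tau,\matB_\tau)\le\phi(\scl_0,\matB_0)$. Since $\matB_0=\bm{0}$ and $\scl_0=-\sqrt{rk}$, the base value is $\phi(\scl_0,\matB_0)=k/\sqrt{rk}=\sqrt{k/r}$, so (ii) reads $\phi(\scl_\tau,\matB_\tau)\le\sqrt{k/r}$. The inductive step is a Sherman--Morrison computation: for the rank-one update $\matB_{\tau+1}=\matB_\tau+t\,\v_i\v_i\transp$ with the barrier advanced to $\scl_{\tau+1}=\scl_\tau+1$,
\[
\phi(\scl_{\tau+1},\matB_{\tau+1})=\phi(\scl_{\tau+1},\matB_\tau)-\frac{t\,\v_i\transp\matM_\tau^{-2}\v_i}{1+t\,\v_i\transp\matM_\tau^{-1}\v_i},
\]
and $L(\v_i,\matB_\tau,\scl_\tau)$ is reverse-engineered so that the admissibility condition $1/t\le L(\v_i,\matB_\tau,\scl_\tau)$ is exactly what forces $\phi(\scl_{\tau+1},\matB_{\tau+1})\le\phi(\scl_\tau,\matB_\tau)$, preserving (ii); boundedness of $\phi$ then keeps every eigenvalue strictly above the advanced barrier, preserving (i). At $\tau=r$ the invariants give $\lambda_{\min}(\matB_r)-\scl_r\ge1/\phi(\scl_r,\matB_r)\ge\sqrt{r/k}$, hence $\lambda_{\min}(\matB_r)\ge r-\sqrt{rk}+\sqrt{r/k}\ge r\,(1-\sqrt{k/r})$; with the global normalization recorded in $\matS$ this transfers to $\sigma_k(\matV\transp\matOmega\matS)^2\ge(1-\sqrt{k/r})^2$, i.e. $\sigma_k(\matV\transp\matOmega\matS)\ge1-\sqrt{k/r}$.

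\textbf{Frobenius side and feasibility.} The second guarantee comes from a parallel but simpler linear potential: I track the accumulated mass $\FNormS{\matE\matOmega\matS}=\sm_\tau\matS_{\tau+1,\tau+1}^2\,\FNormS{\eE_{i_\tau}}$ and show the cut $U(\eE_i)\le1/t$ keeps each increment within budget so the total never exceeds $\FNormS{\matE}$, giving $\FNorm{\matE\matOmega\matS}\le\FNorm{\matE}$. The step binding the two sides is the feasibility claim that at every iteration some index $i$ admits a weight with $U(\eE_i)\le1/t\le L(\v_i,\matB_\tau,\scl_\tau)$. I would prove it by averaging over all $n$ columns: $\sm_i U(\eE_i)=(1-\sqrt{k/r})\,\FNormS{\matE}/\FNormS{\matE}=1-\sqrt{k/r}$ using $\sm_i\FNormS{\eE_i}=\FNormS{\matE}$, while $\sm_i\v_i\v_i\transp=\matI_k$ collapses the $L$-sum to $\sm_i L(\v_i,\matB_\tau,\scl_\tau)=\trace(\matM_\tau^{-2})/\Delta_\tau-\phi(\scl_\tau+1,\matB_\tau)$ with $\Delta_\tau=\phi(\scl_\tau+1,\matB_\tau)-\phi(\scl_\tau,\matB_\tau)$; once $\sm_i L\ge\sm_i U$, some single index satisfies $U(\eE_i)\le L(\v_i,\matB_\tau,\scl_\tau)$, leaving room for an admissible $1/t$.

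\textbf{Main obstacle.} The crux is this last inequality, $\trace(\matM_\tau^{-2})/\Delta_\tau-\phi(\scl_\tau+1,\matB_\tau)\ge1-\sqrt{k/r}$, which is the heart of the lower-barrier argument. Setting $x_i=\lambda_i(\matB_\tau)-(\scl_\tau+1)>0$, it reduces to the scalar estimate $\big(\sm_i x_i^{-2}\big)/\big(\sm_i x_i^{-1}(x_i+1)^{-1}\big)\ge\phi(\scl_\tau+1,\matB_\tau)+1-\sqrt{k/r}$, to be proven under the hypothesis $\phi(\scl_\tau,\matB_\tau)=\sm_i(x_i+1)^{-1}\le\sqrt{k/r}$ and the unit barrier step. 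This is where the precise schedule $\scl_\tau=\tau-\sqrt{rk}$ (spacing exactly $1$, matched to the $\sqrt{k/r}$ budget) is essential; everything else is bookkeeping. Since the statement is quoted from~\cite{BDM11a}, I would either carry out this convexity/Cauchy--Schwarz estimate in full or cite the corresponding lemma there.
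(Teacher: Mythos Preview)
The paper does not prove this lemma at all: it is stated with a citation to~\cite{BDM11a}, and the one nontrivial step (existence of an admissible index at each iteration) is explicitly deferred to Lemma~8.1 there. Your proposal reconstructs the Batson--Spielman--Srivastava barrier argument that~\cite{BDM11a} actually uses, so in that sense you have done strictly more than the paper and your route is the same as the cited source. The outline is correct in structure: the lower-barrier potential $\phi$ initialized at $\sqrt{k/r}$, the Sherman--Morrison update showing the admissibility condition $1/t\le L$ is exactly what keeps $\phi$ nonincreasing, the averaging $\sm_i\v_i\v_i\transp=\matI_k$ and $\sm_i\eE_i\transp\eE_i=\FNormS{\matE}$ to establish feasibility, and the final read-off $\lambda_{\min}(\matB_r)>\scl_r=r(1-\sqrt{k/r})$.

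One small slip to watch when you write it out: the passage ``with the global normalization recorded in $\matS$ this transfers to $\sigma_k(\matV\transp\matOmega\matS)^2\ge(1-\sqrt{k/r})^2$'' hides the actual relationship between $\matB_r=\sm_\tau t_\tau\v_{i_\tau}\v_{i_\tau}\transp$ and $(\matV\transp\matOmega\matS)(\matV\transp\matOmega\matS)\transp$, which requires a final rescaling step (and note the paper's displayed $U(\eE)$ uses $\FNormS{\matA}$ in the denominator, almost certainly a typo for $\FNormS{\matE}$, which you have silently corrected). Since you already plan to cite~\cite{BDM11a} for the core scalar inequality, the cleanest thing is simply to invoke the lemma wholesale, exactly as the paper does.
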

By Lemma~\ref{theorem:2setGeneralF},
Algorithm~\ref{alg:deterministic}returns
$\matOmega, \matS$ that
satisfy the rank condition in Lemma~\ref{result2}, so the 
structural bound applies.
Lemma~\ref{theorem:2setGeneralF}
also bounds two key terms in the bound which ultimately
allow us to prove Theorem~\ref{theorem2}.

\subsection{Proofs} \label{sec4}
\paragraph{Proof of Theorem \ref{theorem2}}
\label{sec:proofdet}
By Lemma~\ref{theorem:2setGeneralF}, $\rank(\matV_k\transp\matOmega
\matS)=k$
so the bound in 
Lemma~\ref{result2} holds.
Recall 
\math{\matE=\matA-\matA_k}.
By submultiplicativity,
\math{\TNorm{\matE \matOmega
 \matS (\matV_k\transp\matOmega
 \matS)^+
\matSig
_k^{-1}\matU_k\transp\b}}
is at most
$$\TNorm{\matE \matOmega
 \matS}
\TNorm{\pinv{(\matV_k\transp\matOmega\matS)}}
\TNorm{\matSig
_k^{-1}\matU_k\transp\b}.
$$
We now bound each term to obtain Theorem~\ref{theorem2}:
\begin{align*}
&\TNorm{\matE \matOmega
 \matS}\le\FNorm{\matE \matOmega
 \matS}\le\FNorm{\matE}=\FNorm{\matA - \matA_k}\tag{a}
\\
&\TNorm{\pinv{(\matV_k\transp\matOmega\matS)}}
=
\frac{1}{\sigma_k(\matV_k\transp\matOmega\matS)}
\le
\frac{1}{1 - \sqrt{k/r}}\le1+\varepsilon \tag{b}
\\
&\TNorm{\matSig
_k^{-1}\matU_k\transp\b}
\le
\TNorm{\matSig
_k^{-1}}
\TNorm{\matU_k\transp}\TNorm{\b}= \TNorm{\b} / \sigma_k(\matA)
\tag{c}\\
\end{align*}
\math{(a)} follows from
Lemma~\ref{theorem:2setGeneralF};
\math{(b)} also follows from
Lemma~\ref{theorem:2setGeneralF} using $r=\lceil{9k/\varepsilon^2}\rceil$
and \math{\varepsilon<1/2}; \math{(c)} follows from 
submultiplicativity.
\qedsymb
\paragraph{Proof of Lemma~\ref{result2}}
We will prove a more general result, and 
Lemma~\ref{result2} will be a simple corollary.
We first introduce a general matrix approximation problem and
present an algorithm for this problem (Lemma \ref{result1}).
Lemma \ref{result2} is a corollary of Lemma \ref{result1}.

Let \math{\matB\in\R^{m\times\omega}} be a matrix which we would like to
approximate; let \math{\matA\in\R^{m\times n} }
be the matrix which we will use to
approximate \math{\matB}. Specifically, we want a \emph{sparse} approximation
of \math{\matB} from \math{\matA}, which means that we would like to
choose \math{\matC\in\R^{m\times r}}
 consisting of \math{r< n} columns
from \math{\matA} such that
\math{\norm{\matB-\matC\pinv{\matC}\matB}_{\mathrm{F}}} is small.
If \math{\matA=\matB}
%(approximating \math{\matB} using the columns
%of \math{\matB})
, then, this is the column based matrix approximation
problem, which has received much interest
recently~\cite{BMD09,BDM11a}.
The more general problem which we study here, with
\math{\matA\not=\matB}, takes on a surprisingly more difficult flavor.
Our motivation is regression, but the problem could be of more
general interest.
We will approach the problem through the use of matrix factorizations.
For \math{\matZ\in\R^{n\times k}}, with
\math{\matZ\transp\matZ=\matI_k}, let
$
\matA=\matH\matZ\transp+\matE,
$
where  \math{\matH\in\R^{m\times k}};
and, \math{\matE\in\R^{m\times n}} is the residual error.
For fixed $\matA$ and $\matZ$, \math{\XNorm{\matE}} ($\xi=2, \mathrm{F}$) is minimized when
\math{\matH=\matA\matZ}. Let \math{\matOmega
\in\R^{n\times r}}, $\matS \in \R^{r \times r}$, and \math{\matC=\matA\matOmega
\matS\in\R^{m\times r}}.

\begin{lemma} %[Generalized Column-based Matrix Approximation]
 \label{result1}
If $\rank(\matZ\transp \matOmega
 \matS) = k$, then,
$$
\XNorm{\matB - \matC \pinv{\matC}\matB} \le
\XNorm{\matB - \matH \pinv{\matH} \matB} +
\XNorm{\matE \matOmega
 (\matZ\transp\matOmega
)^+\pinv{\matH}\matB}.
$$
\end{lemma}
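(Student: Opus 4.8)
The plan is to reduce everything to the optimality of the pseudoinverse projection. Since $\matC\pinv{\matC}\matB$ is the orthogonal projection of (the columns of) $\matB$ onto $\col(\matC)$, it is the best approximation to $\matB$ from $\col(\matC)$ in both norms $\xi=2,\mathrm{F}$; concretely, for \emph{any} matrix $\matY\in\R^{r\times\omega}$ we have
\[
\XNorm{\matB-\matC\pinv{\matC}\matB}\le\XNorm{\matB-\matC\matY}.
\]
(This holds because $\matB-\matC\matY$ splits as the orthogonal residual $(\matI-\matC\pinv{\matC})\matB$ plus a term lying in $\col(\matC)$, and these two pieces have orthogonal column spaces.) So it suffices to exhibit one clever $\matY$ for which $\matC\matY$ reproduces $\matH\pinv{\matH}\matB$ up to an error controlled by $\matE$. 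First I would substitute the factorization $\matA=\matH\matZ\transp+\matE$ into $\matC=\matA\matOmega\matS$, obtaining $\matC=\matH(\matZ\transp\matOmega\matS)+\matE\matOmega\matS$, which splits $\matC$ into a ``signal'' part living in $\col(\matH)$ and a ``noise'' part coming from $\matE$.

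The key step is the choice $\matY=\pinv{(\matZ\transp\matOmega\matS)}\,\pinv{\matH}\matB$, which gives
\[
\matC\matY=\matH(\matZ\transp\matOmega\matS)\pinv{(\matZ\transp\matOmega\matS)}\pinv{\matH}\matB+\matE\matOmega\matS\pinv{(\matZ\transp\matOmega\matS)}\pinv{\matH}\matB.
\]
Here I would invoke the rank hypothesis $\rank(\matZ\transp\matOmega\matS)=k$: since $\matZ\transp\matOmega\matS\in\R^{k\times r}$ then has full row rank, its pseudoinverse is a right inverse, $(\matZ\transp\matOmega\matS)\pinv{(\matZ\transp\matOmega\matS)}=\matI_k$, and the first term collapses to exactly $\matH\pinv{\matH}\matB$. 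Therefore
\[
\matB-\matC\matY=\left(\matB-\matH\pinv{\matH}\matB\right)-\matE\matOmega\matS\pinv{(\matZ\transp\matOmega\matS)}\pinv{\matH}\matB,
\]
and applying the triangle inequality for $\XNorm{\cdot}$, together with the optimality bound above, delivers the claimed inequality.

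The main obstacle is conceptual rather than computational: recognizing that the rank condition is precisely what forces the right-inverse identity $(\matZ\transp\matOmega\matS)\pinv{(\matZ\transp\matOmega\matS)}=\matI_k$, so that the sampled and rescaled columns can \emph{exactly} recover the projection of $\matB$ onto $\col(\matH)$, leaving behind only the $\matE$-dependent residual. Everything else (optimality of the pseudoinverse projection in both norms, and the triangle inequality) is routine. Finally, to recover Lemma~\ref{result2} I would specialize $\matZ=\matV_k$, so that $\matH=\matA\matV_k=\matU_k\matSig_k$, $\matE=\matA-\matA_k$, $\pinv{\matH}=\matSig_k^{-1}\matU_k\transp$, and $\matH\pinv{\matH}\matB=\matU_k\matU_k\transp\b=\matA\x_k^*$; this matches the two terms of Lemma~\ref{result2} exactly (with the final term carrying the rescaling $\matS$, as used in the proof of Theorem~\ref{theorem2}).
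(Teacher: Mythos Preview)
Your proof is correct and follows essentially the same route as the paper: invoke the optimality of $\matC\pinv{\matC}\matB$ to replace $\pinv{\matC}\matB$ by the specific choice $\matY=\pinv{(\matZ\transp\matOmega\matS)}\pinv{\matH}\matB$, expand $\matC=(\matH\matZ\transp+\matE)\matOmega\matS$, use the rank hypothesis to collapse $(\matZ\transp\matOmega\matS)\pinv{(\matZ\transp\matOmega\matS)}=\matI_k$, and finish with the triangle inequality. Your added remark justifying the optimality inequality for the spectral norm (via the orthogonal splitting) is a welcome clarification the paper leaves implicit.
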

\begin{proof}
%$ \XNorm{\matB - \matC \pinv{\matC} \matB} \le$:
$ \XNorm{\matB - \matC \pinv{\matC} \matB}$
\begin{align*}
%&\XNorm{\matB - \matC \pinv{\matC} \matB}\\
%\XNorm{\matB - \matC \pinv{\matC} \matB}
&\le
\XNorm{\matB - \matC \pinv{(\matZ\transp \matOmega
\matS)} \pinv{\matH} \matB}
\tag{a}\\
&=
\XNorm{\matB - \matA\matOmega\matS
 \pinv{(\matZ\transp \matOmega
\matS)} \pinv{\matH} \matB}
\\
&=
\XNorm{\matB - (\matH \matZ\transp + \matE)\matOmega
 \matS\pinv{(\matZ\transp \matOmega
\matS)}
\pinv{\matH} \matB}\\
&=
\XNorm{
\matB - \matH (\matZ\transp \matOmega
\matS) \pinv{(\matZ\transp \matOmega
\matS)} \pinv{\matH} \matB +
\matE\matOmega
 \pinv{(\matZ\transp \matOmega
\matS)} \pinv{\matH} \matB}\\
&=
\XNorm{\matB - \matH \pinv{\matH} \matB + \matE \matOmega
\matS\pinv{(\matZ\transp \matOmega
\matS)} \pinv{\matH} \matB}\tag{b}\\
&\le
\XNorm{\matB - \matH \pinv{\matH} \matB} + \XNorm{ \matE\matOmega
\matS \pinv{(\matZ\transp \matOmega
\matS)} \pinv{\matH} \matB}.\tag{c}
\end{align*}
(a) follows by the optimality of \math{\pinv{\matC} \matB}; (b) follows because
\math{\rank(\matZ\transp
\matOmega
\matS)=k} and so \math{\matZ\transp
\matOmega
\matS \pinv{(\matZ\transp
\matOmega
\matS)}=\matI_{k}}; (c) follows by the triangle inequality of matrix norms.
\end{proof}
Lemma~\ref{result1} is a general tool for the general
matrix approximation problem.
%It is worth parsing this lemma carefully, to understand its
%implications. 
%The left hand side is the
%matrix approximation of \math{\matB} using the
%dimensionally reduced \math{\matC}. 
The bound has two terms which
highlight some  trade offs: the first term is the approximation
of \math{\matB}
using \math{\matH} (\math{\matH} is used in the factorization to
approximate \math{\matA}); the second term
is related to~\math{\matE}, the residual
error in approximating \math{\matA}. Ideally, one should
choose \math{\matH} and \math{\matZ} to
simultaneously
approximate \math{\matB} with \math{\matH} and have small
 residual error~\math{\matE}.
In general, these are two competing goals, and a balance
should be struck. Here, we focus on the
Frobenius norm, and will consider only
one extreme of this trade off, namely choosing the factorization to
minimize~\math{\norm{\matE}_{\mathrm{F}}
}. Specifically,
since \math{\matZ} has rank \math{k}, the best choice for
\math{\matH\matZ\transp} which minimizes \math{\FNorm{\matE}} is
\math{\matA_k}. In this case, \math{\matE=\matA-\matA_k}. Via the
SVD, \math{\matA_k=\matU_k\matSig
_k\matV_k\transp},
and so \math{\matA=(\matU_k\matSig
_k)\matV_k\transp+\matA-\matA_k}. We apply
Lemma~\ref{result1}, with $\matB = \b$, \math{\matH=\matU_k\matSig
_k},
\math{\matZ=\matV_k} and \math{\matE=\matA-\matA_k}, obtaining the
next corollary.
\begin{corollary} \label{result2x}
If $\rank(\matV_k\transp\matOmega
 \matS) = k$, then,
$$ \TNorm{\b - \matC \pinv{\matC}\b} \le
  \TNorm{\b - \matU_k \matU_k\transp \b} +
\TNorm{\matE \matOmega
\matS \pinv{(\matV_k\transp\matOmega
\matS)}\matSig
_k^{-1}\matU_k\transp\b}.$$
\end{corollary}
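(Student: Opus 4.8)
The plan is to obtain Corollary~\ref{result2x} as the $\xi = 2$, $\matB = \b$ specialization of Lemma~\ref{result1}, using the factorization $\matA = \matH\matZ\transp + \matE$ introduced just before the statement. First I would instantiate it by taking $\matZ = \matV_k$, $\matH = \matU_k\matSig_k$, and $\matE = \matA - \matA_k$. This is a legitimate input to Lemma~\ref{result1}: we have $\matH\matZ\transp = \matU_k\matSig_k\matV_k\transp = \matA_k$, so $\matE = \matA - \matH\matZ\transp$ is indeed the residual; $\matZ\transp\matZ = \matV_k\transp\matV_k = \matI_k$ since $\matV_k$ has orthonormal columns; and the rank hypothesis $\rank(\matZ\transp\matOmega\matS) = k$ is exactly the assumed $\rank(\matV_k\transp\matOmega\matS) = k$. (This is also the minimum-$\FNorm{\matE}$ factorization, as $\matH = \matA\matZ = \matA\matV_k = \matU_k\matSig_k$, but the corollary needs only that it is \emph{a} valid factorization.) Hence Lemma~\ref{result1} applies.

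The only genuine computation is to simplify $\matH\pinv{\matH}$ and $\pinv{\matH}$ for this particular $\matH$. Because $\matU_k$ has orthonormal columns ($\matU_k\transp\matU_k = \matI_k$) and $\matSig_k$ is a square invertible diagonal matrix, I would verify $\matH\transp\matH = \matSig_k\matU_k\transp\matU_k\matSig_k = \matSig_k^2$, whence $\pinv{\matH} = (\matH\transp\matH)^{-1}\matH\transp = \matSig_k^{-1}\matU_k\transp$. Consequently $\matH\pinv{\matH} = \matU_k\matSig_k\matSig_k^{-1}\matU_k\transp = \matU_k\matU_k\transp$. These two identities are what turn the abstract bound into the stated one.

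Substituting into the bound of Lemma~\ref{result1} (in the form established in its proof, which retains the rescaling matrix $\matS$), the first term becomes $\XNorm{\b - \matH\pinv{\matH}\b} = \TNorm{\b - \matU_k\matU_k\transp\b}$, and the second term becomes $\XNorm{\matE\matOmega\matS\pinv{(\matV_k\transp\matOmega\matS)}\pinv{\matH}\b} = \TNorm{\matE\matOmega\matS\pinv{(\matV_k\transp\matOmega\matS)}\matSig_k^{-1}\matU_k\transp\b}$, where I use that for the single column $\b$ the norm $\XNorm{\cdot}$ with $\xi = 2$ coincides with the Euclidean norm $\TNorm{\cdot}$. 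Adding the two terms yields exactly the claimed inequality.

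I do not expect a real obstacle: the result is a direct corollary, and essentially all of its content sits in the pseudoinverse identity $\pinv{(\matU_k\matSig_k)} = \matSig_k^{-1}\matU_k\transp$. The one point I would be careful about is that this identity relies on $\matSig_k$ being invertible -- guaranteed since $k < \rank(\matA)$ forces $\sigma_k(\matA) > 0$ -- and on $\matU_k$ having orthonormal columns; without both, the collapse $\matH\pinv{\matH} = \matU_k\matU_k\transp$ fails and the first term would not reduce to the orthogonal-projection residual $\TNorm{\b - \matU_k\matU_k\transp\b}$.
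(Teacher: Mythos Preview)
Your proposal is correct and matches the paper's own argument exactly: the paper derives Corollary~\ref{result2x} by applying Lemma~\ref{result1} with $\matB=\b$, $\matH=\matU_k\matSig_k$, $\matZ=\matV_k$, and $\matE=\matA-\matA_k$, and your explicit verification that $\pinv{(\matU_k\matSig_k)}=\matSig_k^{-1}\matU_k\transp$ and $\matH\pinv{\matH}=\matU_k\matU_k\transp$ fills in the one step the paper leaves implicit.
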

\noindent
Setting
$\TNorm{\b - \matC \pinv{\matC}\b} = \TNorm{\b-\matA \hat\x_r},$
with $\hat\x_r=\matOmega\matS\pinv{\matC}\b$ and
$\TNorm{\b - \matU_k \matU_k\transp \b}=
\norm{\b-\matA \x_k^*}$, we get Lemma~\ref{result2}.

\section{Related work}
A bound can be 
obtained
using the Rank-Revealing QR (RRQR) factorization~\cite{CH92a} which only 
applies to \math{r=k}:
%For fixed $\matA, \b$, and $k$
a QR-like 
decomposition is used to select exactly $k$ columns of $\matA$
to obtain a 
sparse solution $\hat\x_k$.
Combining
Eqn.~(12) of~\cite{CH92a} with Strong RRQR~\cite{GE96} 
one gets a bound
$ \TNorm{\x^{*}_k - \hat\x_k}  \le \sqrt{4 k(n-k)+1} / \sigma_k(\matA) \cdot
\left(
2 \TNorm{\b} + \TNorm{\b - \matA \x^{*}_k}
\right).
$
We compare $\TNorm{\matA \hat\x_r - \b}$ and
$\TNorm{\matA \x^{*}_k - \b}$
%\eqan{
%\TNorm{\x^{*}_k - \hat\x_k}
%&\le&  \frac{\TNorm{\matA - \matC\pinv{\matC}\matA}}{\sigma_k(\matA)} \cdot
%\left(
%2 \TNorm{\pinv{\matA}_k \b} + \frac{\TNorm{\b - \matA \x^{*}_k}}{\sigma_k(\matA)}
%\right) \\
%&\le&
%\frac{\sqrt{4 k(n-k)+1} \TNorm{\matA - \matA_k}}{\sigma_k(\matA)} \cdot
%\left(
%2 \TNorm{\pinv{\matA}_k \b} + \frac{\TNorm{\b - \matA \x^{*}_k}}{\sigma_k(\matA)}
%\right) \\
%&\le&
%\frac{\sqrt{4 k(n-k)+1} \sigma_{k+1}(\matA)}{\sigma_k(\matA)} \cdot
%\left(
%2 \frac{\TNorm{\b}}{\sigma_k(\matA)} + \frac{\TNorm{\b - \matA \x^{*}_k}}{\sigma_k(\matA)}
%\right) \\
%&\le&
%\frac{\sqrt{4 k(n-k)+1}}{\sigma_k(\matA)} \cdot
%\left(
%2 \TNorm{\b} + \TNorm{\b - \matA \x^{*}_k}
%\right)
%}
and our bound is generally stronger and 
applies to any user specified $r > k$.
%Extending the Rank-Revealing QR approach to obtaining arbitrary $r$-sparse solutions is not obvious.
%Our algorithm allows the user to set the sparsity parameter 
%as large as he likes
%and trade the accuracy with the number of non-zeros in $\x$.

\paragraph{Sparse Approximation Literature}
The problem studied in this paper is  NP-hard~\cite{Nat95}.
Sparse approximation has important applications and many approximation algorithms have been proposed. 
The proposed algorithms are typically either greedy or are based on convex optimization relaxations of
the objective. We refer the reader to~\cite{Tropp1,Tropp2,Tropp3} and references therein for more details.
In general, these results try to reconstruct \math{\b} to within 
an error using the sparsest possible solution \math{\x}.
In our setting, we fix the sparsity \math{r} as a constraint and compare our
solution \math{\hat\x_r} with the 
benchmark \math{\x_k^*}.

%As we mentioned above, the most relevant prior results to ours are those in~\cite{CH92a}.

\section{Numerical illustration}

We implemented our algorithm in Matlab and tested it on a sparse approximation problem $\TNorm{\matA\x- \b},$ where $\matA$ and $\b$
are $m \times n$ and $m \times 1$, respectively, with $m=2000$ and $n=1000$. Each element of $\matA$ and $\b$ are i.i.d. Gaussian
random variables with zero mean and unit variance. We chose $k=20$ and experimented with different values of $r = 20,30,40,...,200$.
Figure 1 shows the additive error $\TNorm{\matA \hat\x_r-\b} - \TNorm{ \matA \x_k^*-\b}$.
This experiment illustrates that the proposed algorithm computes a sparse solution vector with small approximation error.
In this case, $\TNorm{\b} \approx 25$ and $\FNorm{\matA-\matA_k}/\sigma_k(\matA) \approx 18$, so the algorithm
performs empirically better than what the worst-case bound of our main theorem predicts.

\begin{figure}[tb]
\begin{center}
\includegraphics[width=1\textwidth]{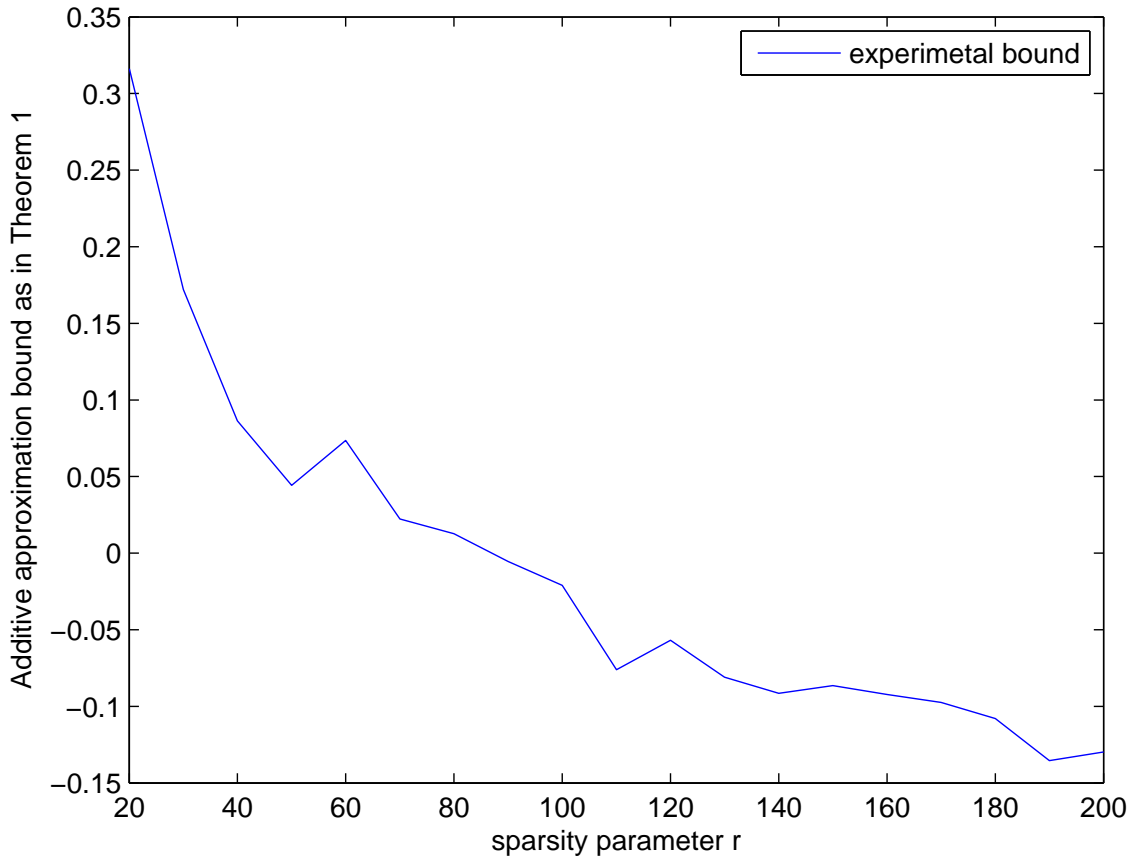}
\caption{Residual error on a problem with a $2000 \times 1000$ matrix~$\matA$.
The non-monotonic decrease arises because the algorithm chooses
columns given \math{r}, which means that the columns chosen for a 
small~\math{r} are not necessarily a subset of the columns chosen for a 
larger~\math{r}.
\label{fig:basis4}}
\end{center}
\end{figure}

\section{Concluding Remarks} 

We observe that our bound involves \math{\norm{\matA-\matA_k}_{\mathrm{F}}}. This
can be converted to a bound in terms of
\math{\norm{\matA-\matA_k}_2} using
 \math{\norm{\matA-\matA_k}_{\mathrm{F}}\le
\sqrt{n-k}\cdot\norm{\matA-\matA_k}_2}. The better bound  \math{\norm{\matA-\matA_k}_{\mathrm{F}} \le O(1+\varepsilon\sqrt{n/k})\cdot\norm{\matA-\matA_k}_2} can be obtained
by using a more expensive variant of Deterministic sampling in \cite{BDM11a} that 
bounds the spectral norm of the sampled~\math{\matE}: 
\math{\norm{\matE\matOmega\matS}_2\le (1+\sqrt{n/r})\norm{\matE}_2}. 
%this version of deterministic sampling has higher 
%running time.

Sparsity in our algorithm is enforced in an unsupervised way:
the columns \math{\matC} are selected obliviously to 
\math{\b}. An interesting open question is whether the use of 
different factorizations in Lemma~\ref{result2x}, together with 
choosing the columns \math{\matC} in a \math{\b}-dependent
way
can give an error bound in terms of the optimal error
\math{\norm{\b-\matA\pinv{\matA}\b}_2}?

\section*{Acknowledgements}
Christos Boutsidis acknowledges the support from XDATA program of the Defense Advanced Research Projects Agency (DARPA), administered through Air Force Research Laboratory contract FA8750-12-C-0323. 
Malik Magdon-Ismail was partially supported 
by the Army Research Laboratory's NS-CTA program under
Cooperative Agreement Number W911NF-09-2-0053 and an NSF
CDI grant NSF-IIS 1124827.
%The views and
%conclusions contained in this document are those of the 
%authors and should not be 
%interpreted as representing the official policies, either
%expressed or implied, of the Army Research Laboratory or the U.S. Government.
% The U.S. Government is authorized to reproduce and distribute 
%reprints for Government purposes notwithstanding any copyright 
%notation here on.
%The author acknowledges the support from XDATA program of the Defense Advanced
%Research Projects Agency (DARPA), administered through Air Force Research
%Laboratory contract FA8750-12-C-0323.

%\clearpage

%% The Appendices part is started with the command \appendix;
%% appendix sections are then done as normal sections
%% \appendix

%% \section{}
%% \label{}

%% References
%%
%% Following citation commands can be used in the body text:
%% Usage of \cite is as follows:
%%   \cite{key}          ==>>  [#]
%%   \cite[chap. 2]{key} ==>>  [#, chap. 2]
%%   \citet{key}         ==>>  Author [#]

%% References with bibTeX database:

%\bibliographystyle{abbrv}
\bibliography{RPI_BIB}

\begin{thebibliography}{10}

%\bibitem{nips}
%C.~Boutsidis, P.~Drineas, and M.~Magdon-Ismail.
%\newblock Sparse Features for PCA-Like Linear Regression.
%\newblock In {\em Neural Information Processing Systems (NIPS)}, 2011

\bibitem{BDM11a}
C.~Boutsidis, P.~Drineas, and M.~Magdon-Ismail.
\newblock Near-optimal column based matrix reconstruction.
\newblock In {\em Annual IEEE Symposium on Foundations of Computer Science (FOCS)}, 2011.

\bibitem{BMD09}
C.~Boutsidis, M.~W. Mahoney, and P.~Drineas.
\newblock An improved approximation algorithm for the column subset selection problem.
\newblock In {\em SODA}, 2009.

\bibitem{CH92a}
T.~F. Chan and P.~C. Hansen.
\newblock Some applications of the rank revealing {QR}~factorization.
\newblock {\em SIAM Journal on Scientific and Statistical Computing}, 13:727--741, 1992.

\bibitem{GHO00}
G.H.~Golub, P.C.~Hansen, and D.~O'Leary.
\newblock Tikhonov regularization and total least squares.
\newblock {\em SIAM Journal on Matrix Analysis and Applications}, 21(1):185--194, 2000.

\bibitem{GE96}
M.~Gu and S.~Eisenstat.
\newblock Efficient algorithms for computing a strong rank-revealing {QR} factorization.
\newblock {\em SIAM Journal on Scientific Computing}, 17:848--869, 1996.

\bibitem{Han87}
P.C.~Hansen.
\newblock The truncated svd as a method for regularization.
\newblock {\em BIT Numerical Mathematics}, 27(4):534--553, 1987.

\bibitem{Nat95}
B.~Natarajan. 
\newblock Sparse approximate solutions to linear systems. 
\newblock {\em SIAM Journal on Computing}, 24(2):227Ð234, 1995.

\bibitem{Tib96}
R.~Tibshirani.
\newblock Regression shrinkage and selection via the lasso.
\newblock {\em Journal of the Royal Statistical Society}, pages 267--288, 1996.

\bibitem{Tropp1}
J.~Tropp.
\newblock Greed is good: Algorithmic results for sparse approximation.
\newblock {\em Information Theory, IEEE Transactions} on 50:10 (2004): 2231-2242.

\bibitem{Tropp2}
J.~Tropp, A.~Gilbert, and M.~Strauss. 
\newblock Algorithms for simultaneous sparse approximation. Part I: Greedy pursuit. 
\newblock {\em Signal Processing} 86.3 (2006): 572-588.

\bibitem{Tropp3}
J.~Tropp.
\newblock Algorithms for simultaneous sparse approximation. Part II: Convex relaxation.
\newblock {\em Signal Processing} 86.3 (2006): 589-602.


\end{thebibliography}

%% Authors are advised to submit their bibtex database files. They are
%% requested to list a bibtex style file in the manuscript if they do
%% not want to use model3-num-names.bst.

%% References without bibTeX database:

% \begin{thebibliography}{00}

%% \bibitem must have the following form:
%%   \bibitem{key}...
%%

% \bibitem{}

% \end{thebibliography}

\end{document}